\newtheorem{theorem}    {Theorem}
\newtheorem{definition} {Definition}
\newtheorem{lemma}      {Lemma}
\newtheorem{remark}     {Remark}
\newtheorem{example}    {Example}
\begin{document}
%
\title{
Optimal Memory Scheme for Accelerated Consensus Over Multi-Agent Networks}
%
%
%

\author{Jiahao~Dai,
        Jing-Wen~Yi$^*$,
        Li~Chai,~\IEEEmembership{Member,~IEEE,}

\thanks{This work was supported by the National Natural Science Foundation of China (grant 61625305 and 61701355).}
\thanks{$^*$ Corresponding author. Jing-Wen Yi is with the Engineering Research Center of Metallurgical Automation and Measurement Technology, Wuhan University of Science and Technology, Wuhan, China. e-mail: yijingwen@wust.edu.cn.}
\thanks{Jiahao Dai is with the same University. e-mail: daijiahao@wust.edu.cn.}
\thanks{Li Chai is with the same University. e-mail: chaili@wust.edu.cn.}}
%
%

\markboth{Journal of \LaTeX\ Class Files,~Vol.~xx, No.~x, August~202x}%
{Shell \MakeLowercase{\textit{et al.}}: Bare Demo of IEEEtran.cls for IEEE Journals}
%



\maketitle

\begin{abstract}
The consensus over multi-agent networks can be accelerated by introducing agent's memory to the control protocol.
In this paper, a more general protocol with the node memory and the state deviation memory is designed.
We aim to provide the optimal memory scheme to accelerate consensus.
The contributions of this paper are three:
(i) For the one-tap memory scheme,  we demonstrate that the state deviation memory is useless for the optimal convergence.
(ii) In the worst case, we prove that it is a vain to add any tap of the state deviation memory, and the one-tap node memory is sufficient to achieve the
optimal convergence.
(iii) We show that the two-tap state deviation memory is effective on some special networks, such as star networks.
Numerical examples  are listed to illustrate the validity and correctness of the obtained results.

\end{abstract}

\begin{IEEEkeywords}
Optimal memory, accelerated consensus, multi-agent networks, convergence rate.
\end{IEEEkeywords}

%
\IEEEpeerreviewmaketitle

\section{Introduction}

Consensus is a basic problem in distributed coordination control over multi-agent networks, which has been studied extensively in the last
decades \cite{2004Consensus,2007Consensus,2007A,2013An}.
The usual idea to solve this problem is to use the local information to design a consensus protocol, so that the states of all agents can reach a common value over time.
Due to the differences in network topologies, control strategies and system models, scholars study the problem of consensus from a multi-faceted perspective, including consensus in switching topology \cite{2004Consensus,6596518}, consensus with communication delays \cite{4639466,2018Consensus}, consensus with sampling data \cite{7500132,8664588}, quantized consensus \cite{6862838,7307996}, consensus in nonlinear systems \cite{2013Consensus,7094259}, etc.

Convergence rate is an important performance indicator of consensus.
Optimizing the weight of the network \cite{XIAO200465,4627467,2011Network,7389373} is a effective method
to accelerate the consensus.
Xiao et al. \cite{XIAO200465} concerned on how to choose the network weights to derive the fastest convergence rate, and gave the optimal constant edge weights to achieve the accelerated consensus.
To improve the convergence rate, Kokiopoulou et al. \cite{4627467} applied a polynomial filter on the network matrix to shape its spectrum.
You et al. \cite{2011Network} revealed how the network affect the consensus of the discrete-time multi-agent system, and gave a lower bound of the optimal convergence rate.
Apers et al. \cite{7389373} proposed a preconditioner to optimize the edge weights of a given graph and cluster the eigenvalues towards better polynomial acceleration.

Some researchers consider time-varying control strategies \cite{6338354,KIBANGOU201419,6876198,2019Average} to improve the convergence rate.
By analyzing the properties of Chebyshev polynomials,  Montijano et al. \cite{6338354} designed a fast and stable distributed consensus algorithm.
Kibangou \cite{KIBANGOU201419} considered the multi-agent system under time-varying control,
and applied a matrix factorization approach to get the condition of finite-time consensus.
Safavi and Khan \cite{6876198} introduced an approach termed as successive nulling of eigenvalues under time-varying control,
and proposed some necessary and sufficient conditions for the multi-agent system to achieve the finite-time consensus.
Yi et al. \cite{2019Average} studied the fast consensus problem from the perspective of graph filtering,
and provided some explicit formulas for the optimal convergence rate by using the period control strategy.

The method of using agent's past information is also effective in improving the convergence rate \cite{5411823,8716798,IROFTI2020108776,Yi2021convergence,16M1076629,8431817,8795623,9117158}.
Oreshkin et al. \cite{5411823} proposed a short node memory scheme to accelerate the convergence in distributed averaging.
Kia et al. \cite{8716798} introduced agent's memory into robust dynamic average consensus algorithms, and got the optimal convergence rate by using the method of root locus.
Irofti \cite{IROFTI2020108776} intentionally introduced agent's past information in the control protocol to accelerate the consensus, and derived an optimized value of the convergence rate.
Yi et al. \cite{Yi2021convergence} considered the control protocol with the node memory, and gave the optimal worst-case convergence rate for an uncertain graph set.
These researches all uses agent's own past state stored in memory to accelerate the consensus, which is called the node memory scheme in this paper.
To improve the convergence rate, some researchers also utilize neighbours' past information stored in memory, which is called the state deviation memory.
Olshevsky \cite{16M1076629} added the state deviation memory to the consensus protocol, and get the linear convergence related to the number of nodes.
Bu et al. \cite{8431817} considered a consensus protocol with the state deviation memory, and provided a convergence bound in the worst case.
Moradian et al. \cite{8795623} applied neighbours' past information to achieve the fast consensus of a single integrator system, and determined the depth of memory that can improve the convergence rate.
Pasolini et al. \cite{9117158} proposed a general protocol with the state deviation memory, and formulated the optimal convergence rate when adding the one-tap memory.

Although the aforementioned results have great improvement for the convergence rate of consensus, it's worth noting that there still leave some problems that have not been considered.
For example, in the consensus protocol, introducing the node memory or the state deviation memory can accelerate the consensus, so which type of memory has the better acceleration performance?
If both two memory schemes are introduced to the control protocol, will the consensus be achieved faster?
In this paper, a more general control protocol with memory is designed to solve these problems.
The main contributions can be summarized as follows.

(i) The optimal convergence rate of the short memory scheme is formulated by using Jury stability criterion.
It is found that the one-tap state deviation memory is useless for the optimal convergence.

(ii) It is proved that the optimal worst-case convergence rate cannot be improved by adding more than one-tap memory by transforming the optimization problem of the convergence rate into the robust stabilization of the feedback system.
In the worst case, the one-tap node memory scheme is  sufficient to achieve the optimal convergence, and
it is a vain to add any tap of the state deviation memory.

(iii) For star networks, an optimized convergence rate of the two-tap memory scheme is given,
which indicates that the two-tap state deviation memory is effective on some special networks.

The remainder of this paper is organized as follows.
In Section II, the problem statement is described.
Section III introduces the optimal short memory scheme.
In Section IV, the optimal worst-case memory scheme is proposed, and a special case on star networks is given.
Finally, Section V concludes this paper.

\section{Problem statement}

\subsection{Graph Theory}
Let $\mathcal{V}\!=\!\{\text{v}_1,\text{v}_2,\cdots,\text{v}_N\}$ be the set of agents or nodes,
$\mathcal{E \!\subseteq\! V\times V}$ be a set of edges, and $\mathcal{A}\!=\![a_{ij}]\!\in\! \mathbb{R}^{N\!\times\! N}$ be a weighted adjacency matrix.
The interactions among agents are modeled as an undirected network $\mathcal{G\!=\!(V,E,A)}$.
The edge between $\text{v}_i$ and $\text{v}_j$ is denoted by $e_{ij}\!=\!(\text{v}_{i},\text{v}_{j})\!\in\! \mathcal{E}$,
indicating that there is a communication link between $\text{v}_i$ and $\text{v}_j$.
The adjacency element $a_{ij}\!=\!a_{ji}\!>\!0$ if $e_{ij}\in\mathcal{E}$.
Let ${\mathcal{N}_i} \!= \!\left\{ {{\text{v}_j} \in \mathcal{V}:({\text{v}_i},{\text{v}_j}) \in \mathcal{E} } \right\}$
be the set of neighbors of agent $\text{v}_i$.
The degree of $\text{v}_{i}$ are represented by ${d_i}\!= \!\sum\nolimits_{j = 1}^N {{a_{ij}}}$.
Define the Laplacian matrix of $\mathcal{G}$ as $L\!=\!\mathcal{D\!-\!A}$, where $\mathcal{D}\!:=\!diag\{d_{1},\cdots,d_{N}\}$.
For a connected network, all the eigenvalues of $L$ are real in an ascending order as $0 \!=\! {\lambda _1} \!<\! {\lambda _2} \!\le\!  \cdots  \!\le\! {\lambda _N}$.
\begin{lemma}
\cite{2004Consensus} For any connected undirected network $\mathcal{G}$, its Laplacian matrix is positive-semidefinite,
and has the decomposition ${L=V\Lambda V^{T}}$, where $\Lambda\!=\!diag\{\lambda_{1},\lambda_{2},\cdots,\lambda_{N}\}$ and $V\!=\![\bm{v}_{1},\bm{v}_{2},\cdots,\bm{v}_{N}]\!\in\!\mathbb{R}^{N\times N}$ are unitary.
Zero is a single eigenvalue of $L$, and the corresponding eigenvector is $v_{1}\!=\!\frac{1}{\sqrt{N}}\bm{1}$, where $\bm{1}$ denotes the vector of all ones.
\end{lemma}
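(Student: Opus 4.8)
\emph{Proof proposal.} The plan is to obtain the factorization from the spectral theorem and to pin down the eigenvalue at $0$ via the classical quadratic-form identity for the Laplacian. First I would observe that, the network being undirected, $a_{ij}=a_{ji}$, so $\mathcal{A}$ is symmetric and hence $L=\mathcal{D}-\mathcal{A}$ is real symmetric. The spectral theorem then supplies an orthonormal eigenbasis, i.e. a decomposition $L=V\Lambda V^{T}$ with $V$ orthogonal (unitary over $\mathbb{R}$) and $\Lambda=\mathrm{diag}\{\lambda_{1},\dots,\lambda_{N}\}$ real, and after permuting the columns of $V$ we may order the eigenvalues as $\lambda_{1}\le\lambda_{2}\le\cdots\le\lambda_{N}$.

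Next I would establish positive semidefiniteness from the identity
\begin{equation}
x^{T}Lx=\frac{1}{2}\sum_{i=1}^{N}\sum_{j=1}^{N}a_{ij}(x_{i}-x_{j})^{2}\ge 0,\qquad \forall\, x\in\mathbb{R}^{N},
\end{equation}
which is verified by expanding the double sum and substituting $d_{i}=\sum_{j=1}^{N}a_{ij}$. Nonnegativity of every summand forces $\lambda_{k}\ge 0$ for all $k$. A direct row-sum computation gives $L\bm{1}=0$, so $0$ is an eigenvalue with eigenvector $\bm{1}$; normalizing yields $v_{1}=\frac{1}{\sqrt{N}}\bm{1}$ and $\lambda_{1}=0$.

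The only step that genuinely uses connectivity — and the only one needing any care — is the simplicity of the zero eigenvalue. I would take an arbitrary $x$ in the kernel of $L$; then $x^{T}Lx=0$, and the identity above forces $a_{ij}(x_{i}-x_{j})^{2}=0$ for every pair $(i,j)$. Since $a_{ij}>0$ exactly when $e_{ij}\in\mathcal{E}$, this gives $x_{i}=x_{j}$ across every edge, and propagating this equality along a path between any two nodes — such a path exists by connectedness — yields $x_{1}=\cdots=x_{N}$, i.e. $x\in\mathrm{span}\{\bm{1}\}$. Hence $\ker L$ is one-dimensional, so $\lambda_{2}>0$ and the stated ordering $0=\lambda_{1}<\lambda_{2}\le\cdots\le\lambda_{N}$ follows. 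As this is a textbook fact, I expect to simply cite \cite{2004Consensus} for the statement and retain the short argument above only for self-containedness; no real obstacle arises.
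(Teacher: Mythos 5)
Your proposal is correct and complete: the spectral theorem for the real symmetric $L$, the quadratic-form identity $x^{T}Lx=\tfrac{1}{2}\sum_{i,j}a_{ij}(x_i-x_j)^2$, the row-sum computation $L\bm{1}=\bm{0}$, and the connectivity argument for simplicity of the zero eigenvalue together establish every claim in the lemma. The paper itself offers no proof — it simply cites \cite{2004Consensus} — and your argument is exactly the standard one found there, so there is nothing to reconcile.
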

\subsection{Problem statement}
The discrete-time dynamics of the agent $\text{v}_i$ is given by
\begin{equation}
{x_i}(k + 1) = {x_i}(k) + {u_i}(k)
\end{equation}
where $x_i(k)\!\in\! \mathbb{R}$ denotes the state and $u_i(k)\!\in\!\mathbb{R}$ denotes the control input.

The control protocol is designed as
\begin{equation}
\begin{aligned}
{u_i}(k) =& \sum\limits_{m = 0}^M {{\varepsilon _m}\sum\limits_{j \in {\mathcal{N}_i}} {{a_{ij}}({x_j}(k - m) - {x_i}(k - m)} )}
\\&+ \sum\limits_{m = 0}^M {{\theta _m}{x_i}(k - m)},
\end{aligned}
\end{equation}
where ${\varepsilon _m}\in\mathbb{R}^+,{\theta _m}\in\mathbb{R}$ are control parameters.
Each agent updates its state by using the node memory and the state deviation memory.
The initial states are set as
\[{x_i}(-M) =  \cdots  = {x_i}( - 1) = {x_i}(0).\]
Note that the control protocol (2) is general.
Consider the following two examples.
\\(i) If $\varepsilon_1,\ldots,\varepsilon_M=0$, then (2) becomes the protocol with only node memory
\begin{equation} \nonumber
\begin{aligned}
{u_i}(k) = \varepsilon_0 \sum\limits_{j \in {\mathcal{N}_i}} {{a_{ij}}({x_j}(k) - {x_i}(k)} )
        + \sum\limits_{m = 0}^M {{\theta _m}{x_i}(k - m)},
\end{aligned}
\end{equation}
which has been investigated in \cite{Yi2021convergence} by using the stability theory.
\\(ii) If $\theta_0,\ldots,\theta_M=0$, then (2) becomes the protocol with only state deviation memory
\begin{equation} \nonumber
{u_i}(k) = \sum\limits_{m = 0}^M {{\varepsilon _m}\sum\limits_{j \in {\mathcal{N}_i}} {{a_{ij}}({x_j}(k - m) - {x_i}(k - m)} )},
\end{equation}
which has been investigated in \cite{9117158} by using FIR filtering.


\begin{definition}
The average consensus is said to be reached asymptotically if
\begin{equation}
\mathop {\lim }\limits_{k \to \infty } {x_i}(k) = \frac{1}{N}\sum\nolimits_{j = 1}^N {{x_j}(0)}= \bar x ,\,\,i = {1, \ldots ,N}
\end{equation}
holds for any initial state $x_i(0)$.
\end{definition}

\subsection{Convergence rate}

The system (1) under the control protocol (2) can be written as
\begin{equation}
\bm{x}(k + 1) = [(1 + {\theta _0})I - {\varepsilon _0}L]\bm{x}(k) + \sum\limits_{m = 1}^M {({\theta _m}I \!-\! {\varepsilon _m}L)\bm{x}(k \!-\! m)},
\end{equation}
where $\bm{x}(k)$ is the column stack of $x_i(k), i=1,\ldots,N$.

\begin{lemma}
The consensus of system (4) is achieved only if $\sum\limits_{m = 0}^M {{\theta _m} = 0}$.
\end{lemma}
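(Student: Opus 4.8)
The plan is to obtain $\sum_{m=0}^{M}\theta_m=0$ as a necessary condition by projecting the dynamics (4) onto the all-ones direction. First I would left-multiply (4) by $\bm{1}^{T}$. Since $\mathcal{G}$ is undirected its Laplacian is symmetric and, by Lemma 1, $L\bm{1}=\bm{0}$; hence $\bm{1}^{T}L=\bm{0}^{T}$ and every Laplacian term in (4) vanishes. Writing $s(k):=\bm{1}^{T}\bm{x}(k)=\sum_{j=1}^{N}x_j(k)$, this yields the scalar recursion $s(k+1)=(1+\theta_0)s(k)+\sum_{m=1}^{M}\theta_m\,s(k-m)$.

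Next I would use the consensus hypothesis together with Definition 1. If average consensus holds then $x_i(k)\to\bar{x}$ for all $i$, so $s(k)\to N\bar{x}$ and, for each fixed $m$, $s(k-m)\to N\bar{x}$ as well. Letting $k\to\infty$ in the recursion gives $N\bar{x}=(1+\theta_0)N\bar{x}+\big(\sum_{m=1}^{M}\theta_m\big)N\bar{x}$, that is, $\bar{x}\sum_{m=0}^{M}\theta_m=0$. Since ``consensus is achieved'' is required for \emph{every} initial state, I would finally specialize to one with nonzero average, e.g.\ $x_i(0)=1$ for all $i$, which gives $\bar{x}=1\neq 0$ and therefore $\sum_{m=0}^{M}\theta_m=0$.

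I do not expect a genuine obstacle here; the argument is short. The only points that need care are recognizing that neither the constant initialization $x_i(-M)=\cdots=x_i(0)$ nor the detailed structure of (4) plays any role — only the existence of the limit does — and making explicit that the definition of consensus ranges over all initial conditions, which is precisely what licenses the choice $\bar{x}\neq 0$.

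As an alternative I would mention diagonalizing via $\bm{x}(k)=V\bm{y}(k)$ with $L=V\Lambda V^{T}$ from Lemma 1: the coordinate $y_1(k)$ along $\bm{v}_1=\tfrac{1}{\sqrt{N}}\bm{1}$ satisfies exactly the recursion above, whose characteristic polynomial $z^{M+1}-(1+\theta_0)z^{M}-\sum_{m=1}^{M}\theta_m z^{M-m}$ takes the value $-\sum_{m=0}^{M}\theta_m$ at $z=1$. Thus $\sum_{m}\theta_m\neq 0$ forces $z=1$ not to be a root, so any limit of $y_1(k)$ must be $0$, contradicting $y_1(k)=\bm{v}_1^{T}\bm{x}(k)\to\sqrt{N}\,\bar{x}\neq 0$; this gives the same conclusion and dovetails with the modal analysis used later in the paper.
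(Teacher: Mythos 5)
Your proposal is correct and follows essentially the same route as the paper: both arguments use $L\bm{1}=\bm{0}$ to annihilate the Laplacian terms and then evaluate the dynamics at the consensus limit $\bar x\bm{1}$, forcing $\bar x\sum_{m=0}^{M}\theta_m=0$. Your version is in fact slightly more careful than the paper's, which plugs in a ``stationary solution'' without taking an explicit limit and does not address the degenerate case $\bar x=0$ that you handle by choosing an initial condition with nonzero average.
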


\begin{proof}
When ${k \!\to\! \infty }$, the stationary solutions of (4) satisfies
\begin{equation}
{\bar x} \bm{1} =   {\bar x}[(1 \!+\! {\theta _0})I \!-\! {\varepsilon _0}L]\bm{1}  \!+\! {\bar x}\sum\limits_{m = 1}^M {({\theta _m}I \!-\! {\varepsilon _m}L)\bm{1}}.
\end{equation}
Note that $L\bm{1} = \bm{0}$.
The stationary solutions in equation (5) are kept only if $\sum\limits_{m = 0}^M {{\theta _m} = 0}$.
Thus, the condition $\sum\limits_{m = 0}^M {{\theta _m} = 0}$  is required to ensure that the consensus can be reached.
\end{proof}


Perform the graph Fourier transform \cite{6808520}
\[
{\hat x}_i(k)=\bm{v}_i^T\bm{x}(k),i=1,\dots,N,
\]
and get
\begin{equation}
\hat x_i(k + 1) = (1 + {\theta _0} - {\varepsilon _0}\lambda_i )\hat x_i(k) + \sum\limits_{m = 1}^M {({\theta _m} \!-\! {\varepsilon _m}\lambda_i )} \hat x_i(k - m).
\end{equation}
Then the agent's state signal can be analyzed in the graph spectral domain.

\begin{lemma}
Assume that $\sum\limits_{m = 0}^M {{\theta _m} = 0}$.
The consensus of system (4) is achieved if and only if
$\mathop {\lim }\limits_{k \to \infty } {{\hat x}_i}(k) = 0$ holds for any $i\in\{2,3,\ldots,N\}$.
\end{lemma}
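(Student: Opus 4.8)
The plan is to use the orthogonal change of coordinates furnished by Lemma 1, writing $\bm{x}(k)=\sum_{i=1}^N\hat x_i(k)\bm{v}_i$, and to show that the single mode that need not decay, namely $\hat x_1(k)$, is actually frozen at the value $\sqrt N\,\bar x$ for all $k\ge -M$. Everything else then follows by reconstructing $\bm{x}(k)$ from its modes.

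First I would specialize the decoupled recursion (6) to $i=1$. Since $\lambda_1=0$, it reduces to $\hat x_1(k+1)=(1+\theta_0)\hat x_1(k)+\sum_{m=1}^M\theta_m\hat x_1(k-m)$. The prescribed initialization ${x_i}(-M)=\cdots={x_i}(0)$ yields $\hat x_1(-M)=\cdots=\hat x_1(0)=\bm{v}_1^T\bm{x}(0)=\tfrac{1}{\sqrt N}\bm{1}^T\bm{x}(0)=\sqrt N\,\bar x$. An induction on $k$ that uses the standing hypothesis $\sum_{m=0}^M\theta_m=0$ --- equivalently $(1+\theta_0)+\sum_{m=1}^M\theta_m=1$ --- then gives $\hat x_1(k)=\sqrt N\,\bar x$ for every $k\ge -M$. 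This is the only place the assumption is invoked.

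Next I would reconstruct the state. Since $V$ has orthonormal columns and $\bm{v}_1=\tfrac{1}{\sqrt N}\bm{1}$, we have $\bm{x}(k)=\hat x_1(k)\bm{v}_1+\sum_{i=2}^N\hat x_i(k)\bm{v}_i=\bar x\,\bm{1}+\sum_{i=2}^N\hat x_i(k)\bm{v}_i$. For the ``if'' part, if $\hat x_i(k)\to0$ for all $i\in\{2,\dots,N\}$, then $\bm{x}(k)\to\bar x\,\bm{1}$, i.e. $x_i(k)\to\bar x$ for each $i$, which is exactly the average consensus of Definition 1. For the ``only if'' part, if $x_i(k)\to\bar x$ for all $i$, i.e. $\bm{x}(k)\to\bar x\,\bm{1}$, then continuity of the linear functional $\bm{v}_i^T(\cdot)$ together with $\bm{v}_i^T\bm{1}=\sqrt N\,\bm{v}_i^T\bm{v}_1=0$ for $i\ge2$ gives $\hat x_i(k)=\bm{v}_i^T\bm{x}(k)\to\bar x\,\bm{v}_i^T\bm{1}=0$.

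I do not expect a genuine obstacle; the only point worth stating carefully is that the conserved quantity $\hat x_1(k)$ equals precisely $\sqrt N\,\bar x$ under the given initialization, so that the limit $\bar x\,\bm{1}$ obtained from the vanishing of the higher modes coincides with the average $\tfrac1N\sum_{j=1}^N x_j(0)$ required by Definition 1, rather than with some unspecified scalar multiple of $\bm{1}$.
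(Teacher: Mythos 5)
Your proposal is correct and follows essentially the same route as the paper: spectral decomposition via Lemma 1, showing $\hat x_1(k)$ is conserved at $\sqrt N\,\bar x$ using $\sum_{m=0}^M\theta_m=0$ and the equal initialization, then reconstructing $\bm{x}(k)=\bar x\,\bm{1}+\sum_{i=2}^N\hat x_i(k)\bm{v}_i$ and using orthogonality of the $\bm{v}_i$ for both directions. The only cosmetic difference is that you prove necessity directly by applying $\bm{v}_i^T(\cdot)$ to the limit, whereas the paper argues by contradiction; your version is slightly cleaner but the content is identical.
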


\begin{proof}
For a connected graph, $\lambda_1=0$ and
\[
{{\hat x}_1}(k + 1) = {{\hat x}_1}(k) + \sum\limits_{m = 0}^M {{\theta _m}} {{\hat x}_1}(k - m).
\]
Since ${\hat x_i}(-M) \!= \!  \cdots  \! =  \!{\hat x_i}( - 1)  \!= \! {\hat x_i}(0)$ and $\sum\limits_{m = 0}^M {{\theta _m}  \!= \! 0}$,
we have
\[
{{\hat x}_1}(k) = {{\hat x}_1}(0), \,\,\,\,\forall k \ge 0.
\]
It follows that
\[
{\bm{v}_1}{\hat x}_1(k) ={\bm{v}_1}{{\hat x}_1}(0) =\frac{1}{N}{\bm{1}\bm{1}^T}\bm{x}(0)= \bar x \bm{1}.
\]
Then the state of system (3) can be written as
\begin{equation}
\mathop {\lim }\limits_{k \to \infty } \bm{x}(k) = \bar x \bm{1} + \mathop {\lim }\limits_{k \to \infty } \sum\limits_{i = 2}^N {{\bm{v}_i}{{\hat x}_i}(k)}.
\end{equation}
Substituting $\mathop {\lim }\limits_{k \to \infty } {{\hat x}_i}(k) \!=\! 0, i \!= \!2,\ldots,N$ into (7), the sufficiency can be proved directly.
Suppose that there is a scalar $j\!\in\!\{2,\ldots,N\}$ that satisfies $\mathop {\lim }\limits_{k \to \infty } {{\hat x}_j}(k)\!=\! \delta  \!\ne\! 0$.
Since $\left\langle {{\bm{v}_i},{\bm{v}_j}} \right\rangle  = 0$ holds for any $i\ne j$, then
$\mathop {\lim }\limits_{k \to \infty } \bm{x}(k)- \bar x \bm{1} \!\ne\! \bm{0}$.
This contradiction proves the necessity.
\end{proof}

Denote $\bm{y}_i(k)=[{{\hat x}_i}(k),{{\hat x}_i}(k-1),\cdots,{{\hat x}_i}(k-M)]^T$.
The problem of consensus is transformed into the simultaneous stability problem of $N\!-\!1$ systems
\begin{equation}
\bm{y}_i(k + 1) = \Gamma(\lambda_i) \bm{y}_i(k), i = 2,3,\ldots,N,
\end{equation}
where
\begin{equation}
\Gamma(\lambda_i)  \!=\! \left[ {\begin{array}{*{20}{c}}
{1 \!+\! {\theta _0} \!-\! {\varepsilon _0}\lambda_i}\,&\,{{\theta _1} \!-\! {\varepsilon _1}\lambda_i}& \cdots &{{\theta _M} \!-\! {\varepsilon _M}\lambda_i}\\
1&0& \cdots &0\\
0& \ddots & \ddots & \vdots \\
0&0&1&0
\end{array}} \right].
\end{equation}

\begin{definition}
Assume that the system (4) on a given network $\mathcal{G}$.
Define the convergence rate of the consensus as \cite{XIAO200465,4627467}
\begin{equation}
{r_M} = \mathop {\max }\limits_{i \in \{ 2, \ldots N\}} {\rho}(\Gamma(\lambda_i) ),
\end{equation}
where ${\rho } \left( \cdot\right)$ denotes the spectral radius.
\end{definition}

This paper focuses on designing the control parameters ${\varepsilon _0}, \ldots ,{\varepsilon _M}$ and ${\theta _0}, \ldots ,{\theta _M}$ to achieve the consensus with fast convergence speed, and then obtain the optimal memory scheme.

\section{Optimal short memory Scheme on a given network}

This section aims to analyze the optimal short memory scheme on a given network.

\begin{lemma} (Schur Complement \cite{boyd2004convex}) Given a matrix
\[W = \left[ {\begin{array}{*{20}{c}}W_1&W_2\\W_3&W_4\end{array}}\right],\]
 with nonsigular $W_1\in {\mathbb{R}^{\mu \times \mu}}$, $W_2 \in {\mathbb{R}^{N \times \mu}}$, $W_3 \in {\mathbb{R}^{\mu \times N}}$, and $W_4 \in {\mathbb{R}^{N \times N}}$. Then
$\det W = \det W_1 \cdot \det (W_4 - W_3{W_1^{ - 1}}W_2)$.
\end{lemma}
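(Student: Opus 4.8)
The statement to be proved is the standard determinant identity for block matrices: given the block matrix $W$ with nonsingular upper-left block $W_1$, one has $\det W = \det W_1 \cdot \det(W_4 - W_3 W_1^{-1} W_2)$. My plan is to establish this purely algebraically via a block $LU$-type factorization of $W$, without invoking any deeper structure.

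The plan is to exhibit an explicit factorization of $W$ into a product of block-triangular matrices whose determinants are trivial to read off. Concretely, I would first verify by direct block multiplication the identity
\[
\left[ {\begin{array}{*{20}{c}} I & 0 \\ -W_3 W_1^{-1} & I \end{array}} \right]
\left[ {\begin{array}{*{20}{c}} W_1 & W_2 \\ W_3 & W_4 \end{array}} \right]
= \left[ {\begin{array}{*{20}{c}} W_1 & W_2 \\ 0 & W_4 - W_3 W_1^{-1} W_2 \end{array}} \right],
\]
which uses only the associativity of block matrix products and the fact that $W_1^{-1}$ exists (so that $W_3 W_1^{-1} W_1 = W_3$ makes sense). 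The left multiplier is block lower-triangular with identity blocks on the diagonal, so its determinant is $1$. Next I would take determinants of both sides: the left-hand side becomes $1 \cdot \det W$, while for the right-hand side I would invoke the fact that the determinant of a block upper-triangular matrix equals the product of the determinants of its diagonal blocks, giving $\det W_1 \cdot \det(W_4 - W_3 W_1^{-1} W_2)$. Equating the two yields the claim.

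The one supporting fact I would need to justify (or cite, since it is elementary) is that for a block triangular matrix $\left[\begin{smallmatrix} A & B \\ 0 & C \end{smallmatrix}\right]$ with square diagonal blocks $A$ and $C$, the determinant is $\det A \cdot \det C$; this follows, for instance, from the Leibniz expansion or from the same factorization idea applied recursively, and the case of the diagonal-block identity matrix multiplier is immediate. Since the paper uses this lemma only as a computational tool (for evaluating characteristic-polynomial-type determinants arising from $\Gamma(\lambda_i)$), I would keep the proof to these two or three lines.

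Honestly, there is no real obstacle here — the result is classical and the factorization above is the textbook argument (cf. \cite{boyd2004convex}). The only point requiring a sliver of care is making sure the block dimensions are compatible as stated ($W_1 \in \mathbb{R}^{\mu\times\mu}$, $W_2 \in \mathbb{R}^{N\times\mu}$, etc.), so that $W_3 W_1^{-1} W_2$ is the right size to be subtracted from $W_4$; one should note that the lemma's dimension labels appear slightly unconventional (with $W_2$ listed as $N\times\mu$ rather than $\mu\times N$), but the factorization is insensitive to this as long as the products are well-formed, so I would simply proceed with whichever convention makes the block products conformable.
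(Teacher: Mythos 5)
The paper states this lemma purely as a citation to \cite{boyd2004convex} and supplies no proof of its own, so there is nothing to compare against; your block lower-triangular factorization followed by the determinant rule for block-triangular matrices is the standard, correct argument for this classical identity. You are also right that the paper's dimension labels for $W_2$ and $W_3$ are transposed (they should read $W_2 \in \mathbb{R}^{\mu \times N}$ and $W_3 \in \mathbb{R}^{N \times \mu}$ for the blocks to be conformable); this is a typo in the statement, not a flaw in your proof.
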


\begin{lemma} (Jury Stability Criterion \cite{4066881})
Given a second-order polynomial $d(z)\!=\!z^2+a_1z+a_0$.
The roots of $d(z)=0$ are all in the unit circle if and only if
\[d(1) > 0,d( - 1) > 0,\left| {{a_0}} \right| < 1.\]
\end{lemma}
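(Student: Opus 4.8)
The plan is to convert this discrete-time root-location test into a continuous-time one via the bilinear (Cayley) transform, where it becomes elementary. Substituting $z=\frac{1+w}{1-w}$, which maps the open left half-plane bijectively onto the open unit disk and carries $w=\infty$ to $z=-1$, and clearing denominators by multiplying by $(1-w)^2$, I would compute the identity
\begin{equation} \nonumber
(1-w)^2\, d\!\left(\frac{1+w}{1-w}\right) = d(-1)\,w^2 + 2(1-a_0)\,w + d(1).
\end{equation}
Hence, as long as $z=-1$ is not itself a root of $d$, the roots of $d$ lie in the open unit disk if and only if the roots of the real quadratic $\tilde d(w)=d(-1)\,w^2+2(1-a_0)\,w+d(1)$ lie in the open left half-plane.

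Next I would apply the Routh--Hurwitz test for a real quadratic $c_2w^2+c_1w+c_0$ with $c_2\ne 0$: its roots lie in the open left half-plane if and only if $c_0,c_1,c_2$ all carry the same sign, which is itself a one-line check through the sum and product of the two roots. Applied to $\tilde d$, this forces $d(-1)$, $1-a_0$ and $d(1)$ to share a common sign; that sign is positive, since for a root pair inside the unit disk $d(1)=(1-z_1)(1-z_2)>0$ (a positive product in the real case, a squared modulus in the complex case). This delivers exactly $d(1)>0$, $d(-1)>0$, $a_0<1$. To match the stated form $|a_0|<1$, I would note that $a_0>-1$ follows from $d(1)+d(-1)=2(1+a_0)>0$, while conversely $|a_0|<1$ trivially gives $a_0<1$, so the three stated inequalities are equivalent to the half-plane conditions on $\tilde d$.

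The only routine step is the polynomial identity above; the point that needs genuine care is the degenerate case $d(-1)=0$, where $\tilde d$ drops to first degree and the bilinear correspondence is no longer a bijection. There, however, $z=-1$ is a root of $d$, so the disk condition fails and at the same time $d(-1)>0$ fails, so the equivalence still holds — this boundary bookkeeping, together with the analogous exclusion of $z=+1$ via the strictness of $d(1)>0$, must be stated explicitly rather than swept aside. As a fully self-contained alternative avoiding the transform, I could instead split on the sign of the discriminant $a_1^2-4a_0$: in the complex-conjugate case the roots have modulus $\sqrt{a_0}$ and $d(\pm1)>0$ holds automatically, so the test reduces to $a_0<1$; in the real case, the upward parabola $d$ together with $d(\pm1)>0$ forces each root outside $[-1,1]$ or the pair to straddle it, and $|a_0|=|z_1z_2|<1$ then excludes both roots lying outside, leaving both in $(-1,1)$, with the converse a direct substitution.
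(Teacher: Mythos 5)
Your proof is correct, but note that the paper does not prove this lemma at all: it is stated as a known result with a citation to a standard reference on the Jury test, so there is no in-paper argument to compare against. Your derivation is a legitimate, self-contained substitute. The bilinear identity $(1-w)^2 d\bigl(\tfrac{1+w}{1-w}\bigr) = d(-1)w^2 + 2(1-a_0)w + d(1)$ checks out, the map does carry the open left half-plane onto the open unit disk, and the Routh--Hurwitz reduction to ``all coefficients of the same sign'' is the standard route by which the Jury conditions are usually derived --- indeed it mirrors exactly the substitution $z = r\tfrac{s+1}{s-1}$ that the paper itself uses later in the proof of Theorem 3. Your handling of the two boundary subtleties is sound: the degenerate case $d(-1)=0$ (degree drop of $\tilde d$) is correctly dismissed because both sides of the equivalence then fail, and one can add that $\tilde d(1) = d(1) + 2(1-a_0) + d(-1) = 4 \neq 0$, so no root of $\tilde d$ is ever sent to $z=\infty$ and the root correspondence is a genuine bijection. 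The observation that $d(1)+d(-1)=2(1+a_0)$ cleanly reconciles the one-sided condition $a_0<1$ coming from the middle Routh coefficient with the stated two-sided condition $|a_0|<1$. Your alternative discriminant-splitting argument is also valid and has the merit of avoiding conformal maps entirely; either version would serve as a proof where the paper offers only a pointer to the literature.
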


Define ${p}(z,\lambda_i) \!=\! \det (zI - \Gamma(\lambda_i) )$.
Using the method of Shur complement,
the characteristic polynomial of $\Gamma(\lambda_i)$ can be easily calculated as
\begin{equation}
\begin{aligned}
{p}(z,\lambda_i)= &{z^{M + 1}} + ({\varepsilon _0}{\lambda _i} - 1 - {\theta _0}){z^M}
           \\&+ \sum\limits_{m = 1}^M {({\varepsilon _m}{\lambda _i} - {\theta _m}){z^{M - m}}}.
\end{aligned}
\end{equation}
Then the problem of accelerated consensus can be converted into the optimization problem
\begin{equation}
\begin{array}{c}
\mathop {\min }\limits_{{\varepsilon _0}, \ldots ,{\varepsilon _M},{\theta _0}, \ldots ,{\theta _M}} r_M\\
\begin{array}{*{20}{c}}
{s.t.}&{\begin{array}{l}
{p}(z,\lambda_i) = 0 , \,i=2,\ldots,N.
\end{array}}
\end{array}
\end{array}
\end{equation}

The optimization problem (12) is difficult to solve especially when $M$ is large.
The optimal convergence rate with $M=0$ is $r_0^*=\frac{{{\lambda _N} - {\lambda _2}}}{{{\lambda _N} + {\lambda _2}}}$,
which has been solved in \cite{XIAO200465}.
In this section, we explore the optimal convergence rate when $M=1$.

\begin{theorem}
Consider the consensus of the system (4) on a connected network $\mathcal{G}$.
The optimal convergence rate of $M\!=\!1$ is
\begin{equation}
{r^*_1} = \frac{{\sqrt {{\lambda _N}}  - \sqrt {{\lambda _2}} }}{{\sqrt {{\lambda _N}}  + \sqrt {{\lambda _2}} }}
\end{equation}
with the optimal control parameters
\begin{equation}
\begin{array}{l}
{\varepsilon _0^*} = \frac{4}{{{{(\sqrt {{\lambda _N}}  + \sqrt {{\lambda _2}} )}^2}}},{\varepsilon _1^*} = 0,\\
{\theta _0^*} = {\left( {\frac{{\sqrt {{\lambda _N}}  - \sqrt {{\lambda _2}} }}{{\sqrt {{\lambda _N}}  + \sqrt {{\lambda _2}} }}} \right)^2},{\theta _1^*} =  - {\theta _0}.
\end{array}
\end{equation}
\end{theorem}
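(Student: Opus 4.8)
The plan is to read (12) for $M=1$ as a min--max root-placement problem and settle it in two halves: first verify that the claimed gains realize the rate $r_1^{*}$, then show that no admissible gains beat it. By Lemma 2 any consensus protocol has $\theta_1=-\theta_0$, so with $M=1$ the characteristic polynomial (11) becomes $p(z,\lambda_i)=z^{2}+(\varepsilon_0\lambda_i-1-\theta_0)z+(\varepsilon_1\lambda_i+\theta_0)$, which is affine in $\lambda$, and $r_1=\max_{i\ge 2}\rho(\Gamma(\lambda_i))$ is the largest modulus of its roots over $i$. After the rescaling $z\mapsto rz$, the statement $\rho(\Gamma(\lambda))\le r$ is equivalent to the non-strict (boundary) form of the Jury criterion (Lemma 5):
\[
p(r,\lambda)\ge 0,\qquad p(-r,\lambda)\ge 0,\qquad |\varepsilon_1\lambda+\theta_0|\le r^{2}.
\]

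\emph{Achievability.} Substituting $\varepsilon_1^{*}=0$ and $\theta_0^{*}=(r_1^{*})^{2}$ turns $p(z,\lambda_i)$ into $z^{2}+(\varepsilon_0^{*}\lambda_i-1-\theta_0^{*})z+(r_1^{*})^{2}$. A short computation shows the affine map $\lambda\mapsto 1+\theta_0^{*}-\varepsilon_0^{*}\lambda$ sends $\lambda_2$ to $2r_1^{*}$ and $\lambda_N$ to $-2r_1^{*}$, hence maps $[\lambda_2,\lambda_N]$ onto $[-2r_1^{*},2r_1^{*}]$. Consequently the discriminant $(\varepsilon_0^{*}\lambda_i-1-\theta_0^{*})^{2}-4(r_1^{*})^{2}$ is non-positive for every $i$, so each $\Gamma(\lambda_i)$ has a conjugate pair of roots of modulus $\sqrt{(r_1^{*})^{2}}=r_1^{*}$ (a real double root $\pm r_1^{*}$ at the two extreme eigenvalues). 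Thus $\rho(\Gamma(\lambda_i))=r_1^{*}$ for all $i$, so these gains give $r_1\le r_1^{*}$.

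\emph{Optimality.} Let any admissible $(\varepsilon_0,\varepsilon_1,\theta_0)$ with $\varepsilon_1\ge 0$ achieve rate $r:=r_1$; we may assume $r<1$, since otherwise $r\ge 1>r_1^{*}$ already. Applying the Jury conditions at $\lambda_2$ and at $\lambda_N$ gives, in particular, $p(r,\lambda_2)\ge 0$, $p(-r,\lambda_N)\ge 0$ and $\varepsilon_1\lambda_N+\theta_0\le r^{2}$. Using the first to lower-bound $\varepsilon_0\lambda_2 r$ and the second to upper-bound $\varepsilon_0\lambda_N r$ and comparing (both $\lambda_2,\lambda_N>0$) eliminates $\varepsilon_0$; then $\varepsilon_1\ge 0$ and $\varepsilon_1\lambda_N+\theta_0\le r^{2}$ let the remaining $\theta_0$- and $\varepsilon_1$-dependent terms be discarded monotonically, leaving the scalar quadratic inequality
\[
(\lambda_N-\lambda_2)r^{2}-2(\lambda_N+\lambda_2)r+(\lambda_N-\lambda_2)\le 0 .
\]
Its two roots are exactly $r_1^{*}$ and $1/r_1^{*}$, so $r\in[r_1^{*},1/r_1^{*}]$, and $r<1<1/r_1^{*}$ forces $r\ge r_1^{*}$. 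With the achievability part this proves $r_1^{*}$ is the optimal rate; reading off the equalities that are now forced ($\varepsilon_1\lambda_N+\theta_0=r^{2}$, $\varepsilon_1=0$, equality in the $\varepsilon_0$-elimination) pins the optimizer down to $\varepsilon_1^{*}=0$, $\theta_0^{*}=(r_1^{*})^{2}$, $\varepsilon_0^{*}=4/(\sqrt{\lambda_N}+\sqrt{\lambda_2})^{2}$, and $\theta_1^{*}=-\theta_0^{*}$.

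\emph{Anticipated obstacle.} The delicate step is the optimality direction: one has to choose the \emph{right} two Jury inequalities --- the value at $z=+r$ for the slow mode $\lambda_2$ and at $z=-r$ for the fast mode $\lambda_N$, which are precisely the active constraints at the optimum --- and then eliminate $\varepsilon_0$, $\theta_0$ and the state-deviation gain $\varepsilon_1$ in the correct monotone order, where the sign conditions $\varepsilon_1\ge 0$ and $r<1$ are essential for the collapse to a single quadratic in $r$. A minor technical point is justifying the non-strict boundary form of the Jury criterion for roots on $|z|=r$, obtained from Lemma 5 by letting the test radius decrease to $r$.
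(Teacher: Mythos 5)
Your proposal is correct and follows essentially the same route as the paper: rescale $z\mapsto r\tilde z$, apply the non-strict Jury conditions at $\lambda_2$ and $\lambda_N$, and combine them with non-negative multipliers to eliminate $\varepsilon_0$, $\theta_0$, $\varepsilon_1$ and arrive at the quadratic $(\lambda_N-\lambda_2)r^2-2(\lambda_N+\lambda_2)r+(\lambda_N-\lambda_2)\le 0$, whose relevant root is $r_1^{*}$ (the paper eliminates $\theta_0$ first and you eliminate $\varepsilon_0$ first, but the resulting inequality is identical). Your achievability check via the discriminant --- showing $\lambda\mapsto 1+\theta_0^{*}-\varepsilon_0^{*}\lambda$ maps $[\lambda_2,\lambda_N]$ onto $[-2r_1^{*},2r_1^{*}]$ so every mode has root modulus exactly $r_1^{*}$ --- is a cleaner verification than the paper's ``it is verified that the remaining constraints are satisfied.''
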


\begin{proof}
When $M\!=\!1$, the system can be written as
\[
\bm{x}(k + 1) = [(1 + {\theta _0})I - {\varepsilon _0}L]\bm{x}(k) + {({\theta _1}I \!-\! {\varepsilon _1}L)\bm{x}(k \!-\! 1)}.
\]
The consensus is achieved if and only if the roots of
\begin{equation}
\begin{aligned}
{p}(z,\lambda_2) \!=&{z^2} \!+\! ({\varepsilon _0}{\lambda _2} \!-\! 1 \!-\! {\theta _0})z \!+\! {\varepsilon _1}{\lambda _2}\! +\! {\theta _0}\!=0,\\
{p}(z,\lambda_N) \!=&{z^2} \!+\! ({\varepsilon _0}{\lambda _N} \!-\! 1 \!-\! {\theta _0})z \!+\! {\varepsilon _1}{\lambda _N}\! +\! {\theta _0}\!=0\\
\end{aligned}
\end{equation}
are in the unit circle.
Let $z = r\tilde z$ in (15).
Then the roots of (15) are in the circle with radius $r$ if and only if the roots of
\begin{equation}
\begin{array}{*{20}{l}}
{r^2}{{\tilde z}^2} + ({\varepsilon _0}{\lambda _2} - {\theta _0} - 1)r\tilde z + {\varepsilon _1}{\lambda _2} + {\theta _0} = 0,\\
{r^2}{{\tilde z}^2} + ({\varepsilon _0}{\lambda _N} - {\theta _0} - 1)r\tilde z + {\varepsilon _1}{\lambda _N} + {\theta _0} = 0
\end{array}
\end{equation}
are in the unit circle.
According to the Jury stability criterion, the roots of (16) are in or on the unit circle if and only if
\begin{subequations}
\begin{align}
&{r^2} + ({\varepsilon _0}{\lambda _2} - {\theta _0} - 1)r + {\varepsilon _1}{\lambda _2} + {\theta _0} \ge 0\\
&{r^2} + ({\varepsilon _0}{\lambda _N} - {\theta _0} - 1)r + {\varepsilon _1}{\lambda _N} + {\theta _0} \ge 0\\
&{r^2} - ({\varepsilon _0}{\lambda _2} - {\theta _0} - 1)r + {\varepsilon _1}{\lambda _2} + {\theta _0} \ge 0\\
&{r^2} - ({\varepsilon _0}{\lambda _N} - {\theta _0} - 1)r + {\varepsilon _1}{\lambda _N} + {\theta _0} \ge 0\\
&{r^2} - {\varepsilon _1}{\lambda _2} - {\theta _0} \ge 0,\\
&{r^2} - {\varepsilon _1}{\lambda _N} - {\theta _0} \ge 0.
\end{align}
\end{subequations}
To eliminate $\theta_0$, we multiply ${1 \!-\! r}$ to (17e) and add (17a), and give
\begin{equation}
r{\lambda _2}({\varepsilon _0} +{\varepsilon _1}) - r{(r\! -\! 1)^2} \ge 0.
\end{equation}
Similarly, multiply ${1 \!+\! r}$ to (17f) and add (17d), give
\begin{equation}
- r{\lambda _N}({\varepsilon _0} + {\varepsilon _1}) + r{(r \!+\! 1)^2} \ge 0.
\end{equation}
To eliminate $\varepsilon _0$ and $\varepsilon _1$, we first multiply $\lambda_N$ to (18) and  $\lambda_2$ to (19), and have
\begin{equation}
r{\lambda _2}{\lambda _N}({\varepsilon _0} + {\varepsilon _1}) - {\lambda _N}r{(r - 1)^2} \ge 0,
\end{equation}
\begin{equation}
 - r{\lambda _2}{\lambda _N}({\varepsilon _0} + {\varepsilon _1}) + {\lambda _2}r{(r + 1)^2} \ge 0.
\end{equation}
Then add (20) and (21), get
\[
{\lambda _2}{(r + 1)^2} - {\lambda _N}{(r - 1)^2} \ge 0.
\]
It follows that
\begin{equation}
r \ge \frac{{\sqrt {{\lambda _N}}  - \sqrt {{\lambda _2}} }}{{\sqrt {{\lambda _N}}  + \sqrt {{\lambda _2}} }}.
\end{equation}
The optimal solution $r=r_1^*$ is obtained if and only if
\begin{equation}
\left[ {\begin{array}{*{20}{c}}
{{\lambda _2}r}&{{\lambda _2}}&{1 \!-\! r}\\
{ - {\lambda _N}r}&{{\lambda _N}}&{1 \!+\! r}\\
0&{ - {\lambda _2}}&{ - 1}\\
0&{ - {\lambda _N}}&{ - 1}
\end{array}} \right] \cdot
\left[ {\begin{array}{*{20}{c}}
{{\varepsilon _0}}\\
{{\varepsilon _1}}\\
{{\theta _0}}
\end{array}} \right] = \left[ {\begin{array}{*{20}{c}}
{\!-{r^2}\! +\!r}\\
{\!-{r^2} \!-\! r}\\
{\!-{r^2}}\\
{\!-r^2}
\end{array}} \right].
\end{equation}
The solution of equation (23) is given by (14).
It is verified that the remaining constraints in (17) are satisfied.
This completes the proof.
\end{proof}

\begin{remark}
The optimal convergence rate and the corresponding control parameters are related to the eigenratio $\lambda_2/\lambda_N$ of the Laplacian matrix.
The larger eigenratio corresponds to the better network connectivity, which leads to the faster convergence of consensus.
\end{remark}

\begin{remark}
By utilizing time-varying control without memory,
\cite{2019Average} proposed that the lower bound of the optimal convergence rate is $\frac{{\sqrt {{\lambda _N}}  - \sqrt {{\lambda _2}} }}{{\sqrt {{\lambda _N}}  + \sqrt {{\lambda _2}} }}=r_1^*$.
This means that the optimal convergence rate of the constant control scheme with the one-tap memory has reached the limit value of the convergence rate of the time-varying control scheme without memory.
\end{remark}

\begin{remark}
Note that $\varepsilon_1^*=0$ in (14).
This means that it is not necessary to add the state deviation memory when $M\!=\!1$.
The optimal convergence rate with the one-tap state deviation memory proposed in \cite{9117158} is
$\frac{{{\lambda _N} - {\lambda _2}}}{{{\lambda _N} + 3{\lambda _2}}}>r_1^*$, which also corroborates our analysis.
\end{remark}

The following two statements are worth noting.
(i) If only the state deviation memory is used, the convergence rate can be improved as $M$ increases \cite{9117158}.
(ii) If only the node memory is used, the convergence rate cannot be improved for any $M>1$ in the worst case \cite{Yi2021convergence}.
Then a question arises spontaneously: can the convergence rate of the general memory scheme in this paper be improved as $M$ increases?
This question is explored in the next section.

\section{The worst-case optimal memory scheme and a special case}

This section proposes the optimal memory scheme in the worst case, and introduces a special case on star networks.



\begin{lemma} (Routh-Hurwitz Criterion \cite{1100537})
Given a third-order polynomial $d(s)\!=\!a_0s^3+a_1s^2+a_2s+a_3$.
The roots of $d(s)=0$ are all in the open left half plane (the polynomial $d(s)$ is stable) if and only if
\[a_0 > 0,a_1 > 0,a_2 > 0,a_3 > 0,a_1a_2-a_0a_3 > 0.\]
\end{lemma}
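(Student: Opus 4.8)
The plan is to prove the two implications of the biconditional directly, using the elementary fact that a real cubic has at least one real root; I would not invoke the general Routh--Hurwitz machinery.

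For \emph{necessity}, I would start from the hypothesis that every root of $d$ lies in the open left half plane. A real cubic then has either three negative real roots or one negative real root together with a conjugate pair of negative real part, so I can write $d(s)=a_0(s+r_1)(s+r_2)(s+r_3)$ with each $r_j>0$, or $d(s)=a_0(s+r)(s^2+ps+q)$ with $r>0$ and $p,q>0$ (the latter because a real quadratic whose roots are a conjugate pair of negative real part has positive coefficients). Expanding and matching coefficients expresses $a_1,a_2,a_3$ as $a_0$ times sums of products of these positive numbers, so $a_1,a_2,a_3>0$ once $a_0>0$; and a short expansion shows $a_1a_2-a_0a_3$ equals $a_0^2$ times a polynomial in the same positive quantities with nonnegative coefficients and a strictly positive term (for instance $a_0^2\,p(q+pr+r^2)$ in the second case, and $a_0^2\big(\sum_{i\neq j}r_i^2 r_j+2r_1r_2r_3\big)$ in the first), hence is positive.

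For \emph{sufficiency}, I would assume $a_0,a_1,a_2,a_3>0$ and $a_1a_2-a_0a_3>0$. Since $d(0)=a_3>0$ while $d(s)\to-\infty$ as $s\to-\infty$ (using $a_0>0$), $d$ has a real root $-r$ with $r>0$, so I can factor $d(s)=a_0(s+r)(s^2+ps+q)$. Matching coefficients gives $a_1=a_0(p+r)$, $a_2=a_0(q+pr)$ and $a_3=a_0qr$, whence $q>0$. The crux is to deduce $p>0$: expanding $a_1a_2-a_0a_3=a_0^2\,p(q+pr+r^2)$ shows that $p$ has the same sign as $q+pr+r^2$, while $a_2>0$ gives $q+pr>0$ and hence $q+pr+r^2>r^2>0$, forcing $p>0$. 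With $p,q>0$ the quadratic factor has both roots in the open left half plane (real and negative when the discriminant is nonnegative, complex with real part $-p/2<0$ otherwise), so together with $-r$ all three roots of $d$ lie in the open left half plane.

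I expect the only non-routine step to be this sign argument for $p$ in the sufficiency direction: neither $a_2>0$ nor $a_1a_2-a_0a_3>0$ alone determines the sign of the middle coefficient of the quadratic factor, and they must be combined in the right order. Everything else — the two factorizations, the Vieta relations, and the quadratic root count — is routine. As a quicker alternative I could build the Routh array of $d$, whose first column is $a_0,\ a_1,\ (a_1a_2-a_0a_3)/a_1,\ a_3$, and observe that stability is equivalent to these four entries sharing a common sign; but that shortcut presupposes the Routh test rather than proving it from scratch.
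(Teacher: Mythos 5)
Your proof is correct, but note that the paper does not prove this lemma at all: it is stated as a known result with a citation to a standard reference, so there is no in-paper argument to compare against. Your self-contained derivation is sound. The necessity direction via the two factorizations $a_0(s+r_1)(s+r_2)(s+r_3)$ and $a_0(s+r)(s^2+ps+q)$ checks out, and the identities $a_1a_2-a_0a_3=a_0^2\bigl(\sum_{i\neq j}r_i^2r_j+2r_1r_2r_3\bigr)$ and $a_1a_2-a_0a_3=a_0^2\,p(q+pr+r^2)$ are both correct. The sufficiency direction is also right, and you correctly identify the one non-routine step: combining $a_2>0$ (which gives $q+pr>0$, hence $q+pr+r^2>0$) with $a_1a_2-a_0a_3=a_0^2\,p(q+pr+r^2)>0$ to force $p>0$; neither inequality alone suffices. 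One small caveat you share with the lemma as stated: the ``only if'' direction cannot literally yield $a_0>0$, since negating $d$ preserves the root set while flipping every sign (e.g.\ $-(s+1)^3$). You implicitly adopt the standard normalization $a_0>0$ (``once $a_0>0$''), which is the right reading and is consistent with how the paper applies the criterion in Theorem~3, but it would be worth saying explicitly that the criterion is really ``all four coefficients of one sign, plus $a_1a_2-a_0a_3>0$ after normalizing.'' Your alternative via the Routh array is, as you note, circular for the purpose of a from-scratch proof; the factorization argument is the right choice here.
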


Assume that the system (4) on the uncertain network $\mathcal{G}_{[\alpha ,\beta ]}$, where $\mathcal{G}_{[\alpha ,\beta ]}$ represents all the networks with the nonzero eigenvalues of $L$ in the interval $[\alpha, \beta]$.

Define the worst-case convergence rate of the consensus as
\begin{equation}
{\gamma _M} = \mathop {\sup }\limits_{\mathcal{G} \in {\mathcal{G}_{[\alpha ,\beta ]}}} {r_M}.
\end{equation}

\begin{lemma}
The worst-case convergence rate satisfies
\begin{equation}
{\gamma _M} = \mathop {\sup }\limits_{\mathcal{G} \in {\mathcal{G}_{[\alpha ,\beta ]}}} \rho \left( {\Gamma (\lambda )} \right) = \mathop {\sup }\limits_{\mathcal{G} \in {\mathcal{G}_{[\alpha ,\beta ]}}} \mathop {\max }\limits_{n \in \{ 1,2, \ldots ,M + 1\} } \left| {{z_n}(\lambda )} \right|,
\end{equation}
where ${{z}_{n}(\lambda)}$ denotes the root of ${p}(z,\lambda)=0$, and ${p}(z,\lambda)$ is defined in (11).
\end{lemma}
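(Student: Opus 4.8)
The plan is to peel (25) back to the basic definitions and then observe that, as $\lambda$ sweeps over all the nonzero Laplacian eigenvalues of every network in $\mathcal{G}_{[\alpha,\beta]}$, it covers exactly the interval $[\alpha,\beta]$; the identity then follows at once. First I would note that, for a fixed network $\mathcal{G}$, the definition (10) of the convergence rate reads $r_M=\max_{i\in\{2,\ldots,N\}}\rho(\Gamma(\lambda_i))$, and that the spectral radius $\rho(\Gamma(\lambda_i))$ is the largest modulus among the eigenvalues of $\Gamma(\lambda_i)$. Since $p(z,\lambda_i)=\det(zI-\Gamma(\lambda_i))$ is the characteristic polynomial of $\Gamma(\lambda_i)$, its roots $z_1(\lambda_i),\ldots,z_{M+1}(\lambda_i)$ are precisely the eigenvalues of $\Gamma(\lambda_i)$, so $\rho(\Gamma(\lambda_i))=\max_{n\in\{1,\ldots,M+1\}}|z_n(\lambda_i)|$. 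Substituting into (24) expresses $\gamma_M$ as the supremum of $\max_{n}|z_n(\lambda_i)|$ taken over all $\mathcal{G}\in\mathcal{G}_{[\alpha,\beta]}$ and over all nonzero eigenvalues $\lambda_2,\ldots,\lambda_N$ of the associated $L$; writing $\lambda$ for a generic such value gives the middle and right-hand expressions in (25).

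It then remains to establish two inequalities. For the upper bound, the definition of $\mathcal{G}_{[\alpha,\beta]}$ guarantees that every nonzero eigenvalue $\lambda_i$ ($i\ge2$) of every admissible network belongs to $[\alpha,\beta]$, so every term in the above supremum is at most $\sup_{\lambda\in[\alpha,\beta]}\max_{n}|z_n(\lambda)|$, whence $\gamma_M\le\sup_{\lambda\in[\alpha,\beta]}\max_{n}|z_n(\lambda)|$. For the reverse bound, I would exhibit for each $\lambda\in[\alpha,\beta]$ a connected network in $\mathcal{G}_{[\alpha,\beta]}$ whose Laplacian has $\lambda$ as a nonzero eigenvalue --- the complete graph on $N$ nodes with all edge weights equal to $\lambda/N$ does the job, since its only nonzero Laplacian eigenvalue is $\lambda$ (in the simplest case $N=2$ this is the single-edge graph of weight $\lambda/2$). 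For such a network $\max_{n}|z_n(\lambda)|\le r_M\le\gamma_M$, and taking the supremum over $\lambda\in[\alpha,\beta]$ gives $\sup_{\lambda\in[\alpha,\beta]}\max_{n}|z_n(\lambda)|\le\gamma_M$. Combining the two inequalities yields (25).

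The steps that unwind the definitions and prove the upper bound are routine; the only place needing a genuine (if minor) argument is the reverse bound --- i.e., confirming that every point of $[\alpha,\beta]$, not merely a proper subset, is realizable as a nonzero Laplacian eigenvalue of a network in $\mathcal{G}_{[\alpha,\beta]}$ --- and I expect this to be essentially the sole obstacle, which the explicit weighted-graph construction above removes. I would also remark in passing that, because the coefficients of $p(z,\lambda)$ in (11) depend affinely on $\lambda$, the map $\lambda\mapsto\max_{n}|z_n(\lambda)|$ is continuous on the compact interval $[\alpha,\beta]$, so the supremum in (25) is actually attained; this is not required for the statement but is useful for the analysis that follows.
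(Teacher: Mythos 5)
Your proof is correct. The paper states this lemma without any proof, treating it as an immediate consequence of the definitions; your argument supplies exactly the intended reasoning, and the one point that genuinely needs checking — that every $\lambda\in[\alpha,\beta]$ is realized as a nonzero Laplacian eigenvalue of some admissible network — is handled cleanly by your weighted complete-graph construction, whose unique nonzero eigenvalue is indeed $\lambda$.
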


From Theorem 1, the optimal worst-case convergence rate of $M=1$ is
\begin{equation}
{\gamma_{1}^*} = \frac{{\sqrt {{\beta}}  - \sqrt {{\alpha}} }}{{\sqrt {{\beta}}  + \sqrt {{\alpha}} }},
\end{equation}
and the corresponding control parameters are
\begin{equation}
\begin{array}{l}
{\varepsilon _0^*} = \frac{4}{{{{(\sqrt {{\beta}}  + \sqrt {{\alpha}} )}^2}}},{\varepsilon _1^*} = 0,\\
{\theta _0^*} = {\left( {\frac{{\sqrt {{\beta}}  - \sqrt {{\alpha}} }}{{\sqrt {{\beta}}  + \sqrt {{\alpha}} }}} \right)^2},{\theta _1^*} =  - {\theta _0}.
\end{array}
\end{equation}
Before giving the result, we make some settings.
Let
\begin{equation} \nonumber
\begin{aligned}
p(rz,\lambda )= &{r^{M + 1}}{z^{M + 1}} + ({\varepsilon _0}\lambda  - 1 - {\theta _0}){r^M}{z^M}
           \\&+ \sum\limits_{m = 1}^M {({\varepsilon _m}\lambda  - {\theta _m}){r^{M - m}}{z^{M - m}}}.
\end{aligned}
\end{equation}
Set up the feedback system as shown in Fig.1, where
\begin{equation}  \nonumber
P(rz,\lambda ) = \frac{{{(rz)^{ - 1}}}}{{1 \!-\! {(rz)^{ - 1}}}}\lambda ,C(rz) = \frac{{\sum\limits_{m = 0}^M {{\varepsilon _m}{r^{ - m}}{z^{ - m}}} }}{{1 \!-\! \sum\limits_{m = 1}^M {\sum\limits_{j = 0}^{m - 1} {{\theta _j}{r^{ - m}}{z^{ - m}}} } }}.
\end{equation}
The transfer function of the closed-loop system is
\begin{equation}  \nonumber
G(rz,\lambda ) = \frac{{P(rz,\lambda )}}{{1 + P(rz,\lambda )C(rz)}} = \frac{{\lambda \sum\limits_{m = 0}^M {{r^{M - m}}{z^{M - m}}} }}{{p(rz,\lambda )}}.
\end{equation}

\begin{figure}[!htbp]
\centering
\includegraphics[height=2.2cm]{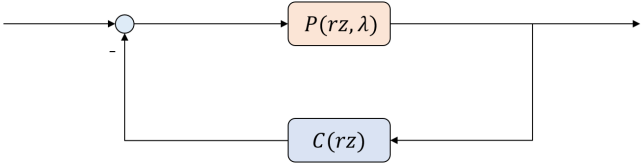}
\caption{The feedback system}
\label{fig:label}
\end{figure}

Next,
by converting the problem of fast consensus to the robust stabilization of the feedback system,
we give the optimal worst-case convergence rate of any $M\ge 1$ as follows.

\begin{theorem}
Consider the consensus of system (4) on the uncertain network $\mathcal{G}_{[\alpha ,\beta ]}$.
The optimal worst-case convergence rate is
\begin{equation}
{\gamma_{M}^*} = \frac{{\sqrt {{\beta}}  - \sqrt {{\alpha}} }}{{\sqrt {{\beta}}  + \sqrt {{\alpha}} }},
\end{equation}
and a set of optimal control parameters is given by (27).
\end{theorem}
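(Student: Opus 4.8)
The plan is to prove Theorem~2 in two halves: achievability, $\gamma_M^{*}\le\bar{\gamma}$, and a converse, $\gamma_M^{*}\ge\bar{\gamma}$, where I abbreviate $\bar{\gamma}:=\frac{\sqrt{\beta}-\sqrt{\alpha}}{\sqrt{\beta}+\sqrt{\alpha}}$. Achievability is a one-line factorisation: take the parameters in (27) together with $\varepsilon_2^{*}=\cdots=\varepsilon_M^{*}=\theta_2^{*}=\cdots=\theta_M^{*}=0$. Then (11) collapses to $p(z,\lambda)=z^{M-1}\big(z^{2}+(\varepsilon_0^{*}\lambda-1-\theta_0^{*})z+\theta_0^{*}\big)$, so $\rho(\Gamma(\lambda))$ is the spectral radius of precisely the quadratic handled in the proof of Theorem~1 (with $\lambda_2,\lambda_N$ replaced by $\alpha,\beta$). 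Its root product is $\theta_0^{*}=\bar{\gamma}^{2}$, and a discriminant check shows the roots are a conjugate pair of modulus $\bar{\gamma}$ for $\lambda\in(\alpha,\beta)$ and a double real root at $\pm\bar{\gamma}$ at the endpoints; hence $r_M\equiv\bar{\gamma}$ on $[\alpha,\beta]$ for this choice, giving $\gamma_M^{*}\le\bar{\gamma}$.

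For the converse I would exploit the feedback reformulation set up just before the theorem. Fix \emph{any} admissible control parameters; I claim $\gamma_M\ge\bar{\gamma}$. If $\gamma_M\ge1$ this is immediate since $\bar{\gamma}<1$, so assume $\gamma_M<1$ and pick any $r$ with $\gamma_M<r<1$; then every $p(rz,\lambda)$, $\lambda\in[\alpha,\beta]$, is Schur, i.e.\ the controller $C(rz)$ stabilises the whole plant family $P(rz,\lambda)=\frac{\lambda}{rz-1}$ with all closed-loop poles in $|z|<1$. The structural point is that this plant has a single pole, at $z=1/r$, which lies \emph{outside} the stability region because $r<1$; it therefore imposes an interpolation constraint. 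Writing $S_\lambda(z)=\dfrac{rz-1}{rz-1+\lambda C(rz)}$ for the sensitivity, one has $S_\lambda(1/r)=0$ and $S_\lambda(\infty)=1$ for every $\lambda$, while $S_\lambda$ is analytic on $|z|\ge1$. Since $\frac{1}{S_\lambda(z)}=1+\lambda\,\frac{C(rz)}{rz-1}$, eliminating the $\lambda$-independent factor $\frac{C(rz)}{rz-1}$ gives the fractional-linear relation $S_\lambda=\dfrac{\alpha S_\alpha}{\lambda-(\lambda-\alpha)S_\alpha}$; demanding that $S_\lambda$ be pole-free on $|z|\ge1$ for every $\lambda\in(\alpha,\beta]$ then forces $S_\alpha$ to omit the entire real ray $\big[\tfrac{\beta}{\beta-\alpha},\infty\big)$ there --- which is exactly the statement that the Nyquist locus of $\frac{C(rz)}{rz-1}$ avoids the interval $[-1/\alpha,-1/\beta]$, as anticipated in the setup of the feedback system.

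It remains to convert this omitted-value property into the sharp constant. Map the exterior disk $\{|z|>1\}\cup\{\infty\}$ conformally onto $\mathbb{D}$ by $z\mapsto 1/z$ (so $\infty\mapsto0$ and $1/r\mapsto r$), and map the slit plane $\mathbb{C}\setminus[\tfrac{\beta}{\beta-\alpha},\infty)$ conformally onto the right half-plane by $s\mapsto\sqrt{\tfrac{\beta}{\beta-\alpha}-s}$. Applying the Schwarz--Pick lemma to the induced self-map of $\mathbb{D}$ at the two data points $S_\alpha(\infty)=1$, $S_\alpha(1/r)=0$, and using that $s\mapsto\sqrt{\tfrac{\beta}{\beta-\alpha}-s}$ sends $1$ and $0$ to $\sqrt{\tfrac{\alpha}{\beta-\alpha}}$ and $\sqrt{\tfrac{\beta}{\beta-\alpha}}$ (whose hyperbolic distance in the right half-plane is $\tfrac12\log\tfrac{\beta}{\alpha}$), one obtains $\tfrac12\log\tfrac{\beta}{\alpha}\le\log\tfrac{1+r}{1-r}$, i.e.\ $\sqrt{\beta/\alpha}\le\tfrac{1+r}{1-r}$, which rearranges to $r\ge\bar{\gamma}$. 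Letting $r\downarrow\gamma_M$ yields $\gamma_M\ge\bar{\gamma}$ for every admissible controller, hence $\gamma_M^{*}=\bar{\gamma}$, attained by (27). (An alternative route to the same inequality is to push the extremal ``root-on-the-circle'' configuration through the bilinear substitution $z=\frac{1+s}{1-s}$ and apply the Routh--Hurwitz criterion of Lemma~7 to the resulting cubic.)

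I expect the converse to be the genuine difficulty; the achievability drops out once one sees the characteristic polynomial factor through $z^{M-1}$. The heart of the matter is recognising that robust stability over the \emph{whole} eigenvalue interval, together with the interpolation constraint forced by the plant's unstable pole $z=1/r$, confines the one-parameter family of sensitivities to a single fractional-linear orbit, and therefore confines $S_\alpha$ to a fixed hyperbolic target domain; identifying that domain exactly (equivalently, the forbidden interval $[-1/\alpha,-1/\beta]$ for the loop gain) and then extracting the tight constant $\bar{\gamma}$ by Schwarz--Pick is where the real content lies. The remaining care is bookkeeping: since the leading coefficient of the closed-loop characteristic polynomial is $\lambda$-independent, no closed-loop pole can escape through $\infty$ as $\lambda$ sweeps $[\alpha,\beta]$, and a controller zero at the plant pole $z=1/r$ is automatically excluded because it would make $z=1/r$ a closed-loop pole.
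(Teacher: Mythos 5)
Your proposal is correct, and at the structural level it follows the same route as the paper: reduce the worst-case rate to robust stabilization of the feedback system $P(rz,\lambda)=\lambda/(rz-1)$ over the gain family $\lambda\in[\alpha,\beta]$, and extract the bound $\bigl(\tfrac{1+r}{1-r}\bigr)^2\ge\beta/\alpha$. The difference is in how the key inequality is obtained. The paper treats it as a black box: it invokes the gain-margin optimization framework of Doyle--Francis--Tannenbaum (Ch.~11) and cites Lemma~5 of the prior work for the optimal gain margin $k_{\sup}=\bigl(\tfrac{1+r}{1-r}\bigr)^2$ of this one-pole, one-nonminimum-phase-zero plant. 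You instead prove that lemma from scratch: the interpolation constraints $S_\lambda(1/r)=0$, $S_\lambda(\infty)=1$, the fractional-linear relation $S_\lambda=\alpha S_\alpha/\bigl(\lambda-(\lambda-\alpha)S_\alpha\bigr)$ forcing $S_\alpha$ to omit the ray $[\beta/(\beta-\alpha),\infty)$, and the Schwarz--Pick comparison of the hyperbolic distance between $\infty$ and $1/r$ in the exterior disk ($\log\tfrac{1+r}{1-r}$) with that between $1$ and $0$ in the slit plane ($\tfrac12\log\tfrac{\beta}{\alpha}$) are exactly the classical gain-margin argument, and your constants check out. Your achievability step is also more complete than the paper's one-line assertion: with the higher taps zeroed, $p(z,\lambda)=z^{M-1}\bigl(z^2+(\varepsilon_0^*\lambda-1-\theta_0^*)z+\theta_0^*\bigr)$, and the discriminant/root-product check confirms $r_M\equiv\bar\gamma$ on all of $[\alpha,\beta]$ (double roots $\pm\bar\gamma$ at the endpoints, conjugate pairs of modulus $\bar\gamma$ in between). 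The only points glossed over in both treatments are the routine internal-stability bookkeeping (no unstable pole-zero cancellation at $z=1/r$, which you do flag) and the implicit density of attainable eigenvalues in $[\alpha,\beta]$ needed to make the supremum in (24) range over the whole interval; neither affects correctness.
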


\begin{proof}
The roots of $p(z,\lambda)\!=\!0$ are in the circle with radius $r$ if and only if the polynomial $p(rz,\lambda)$ is stable.
This means that the system with the transfer function ${G}(rz,\alpha)$ needs to be stable for any $\lambda \in [\alpha ,\beta ]$ to ensure the worst-case convergence rate $r$.
The statement that ${G}(rz,\alpha)$ is stable for any $\lambda \in [\alpha ,\beta ]$ can be regarded as the problem of gain margin optimization \cite{1992Feedback} (Chapter 11), that is, a stable ${G}(rz,\alpha)$ with the gain margin $k_{\sup}\!\ge\! \beta /\alpha$.
Then the gain margin of $P(rz,\alpha)$ needs to satisfy $k_{\sup}\!\ge\! \beta /\alpha$.
Note that the optimal gain margin of $P(rz,\alpha)=\frac{{{(r\tilde z)^{ - 1}}}}{{1 - {(r\tilde z)^{ - 1}}}}\alpha $ has been given by
$k_{\sup}\!=\!{({\frac{{1 + r}}{{1 - r}}})^2}$ \cite{Yi2021convergence} (Lemma 5).
It follows from ${({\frac{{1 + r}}{{1 - r}}})^2}\!\ge\! \beta /\alpha$ that
$ r \ge \frac{{\sqrt {{\beta}}  - \sqrt {{\alpha}} }}{{\sqrt {{\beta}}  + \sqrt {{\alpha}} }}$.
A set of control parameters to reach the optimal worst-case convergence rate $r={\gamma_{M}^*}$ is given by (27).
\end{proof}


\begin{remark}
Note that $\gamma_{M}^*=\gamma_{1}^*$ for any $M\ge 1$.
This means that,
in the worst case, the one-tap node memory is sufficient to achieve the optimal convergence, and applying the state deviation memory in \cite{9117158} is unnecessary.
\end{remark}

It is worth noting that \cite{Yi2021convergence} has proposed that the convergence rate of the two-tap node memory scheme cannot be improved under any given network.
However, we find that the two-tap state deviation memory is effective on some special networks.
In particular, on star networks, an explicit formula for the convergence rate with $M\!=\!2$ is given as follows.

\begin{theorem}
Consider the MAS (1) on star networks with $N$ nodes under the control protocol (2).
Denote $M\!=\!2$ and $\mu \!=\! \frac{{N - 1}}{{N + 1}}$.
The following conclusions hold.
\\(i) If the control parameters are set as
\begin{equation}
\begin{array}{l}
{\varepsilon _0}{ = }\frac{{6{\tilde r_2}}}{{N-1}},{\varepsilon _1} = 0,{\varepsilon _2} = \frac{{2{{({\tilde r_2})}^3}}}{{N-1}},{\theta _0} = \frac{{8{{({\tilde r_2})}^2}}}{{{{({\tilde r_2})}^2} + 3}},\\
{\theta _1} =  - 3{({\tilde r_2})^2},{\theta _2} = 3{({\tilde r_2})^2} \!-\! \frac{{8{{({\tilde r_2})}^2}}}{{{{({\tilde r_2})}^2} + 3}},
\end{array}
\end{equation}
the consensus is achieved with the convergence rate
\begin{equation}
\begin{aligned}
\tilde r_2=&  \mu\! +\! {\left( {{\mu^3} \!-\! \mu \!+\! \sqrt {{{({\mu^3} \!-\! \mu)}^2} \!\!+\! {{(1 \!-\! {\mu^2})}^3}} } \, \right)^{\!\frac{1}{3}}}
\\+ &\frac{{\sqrt 3 \rm{i} \!-\! 1}}{2}{\left( {{\mu^3} \!-\! \mu \!-\! \sqrt {{{({\mu^3} \!-\! \mu)}^2} \!\!+\! {{(1\! -\! {\mu^2})}^3} } } \,\right)^{\!\frac{1}{3}}}.
\end{aligned}
\end{equation}
\\(ii) The convergence rate of $M\!=\!2$ is better than the optimal convergence rate of $M\!=\!1$, i.e.,  $\tilde r_2<r_1^*$.
\end{theorem}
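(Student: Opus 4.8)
The plan is to exploit the very special spectrum of a star. For the star on $N$ nodes the nonzero Laplacian eigenvalues are $\lambda=1$ (with multiplicity $N-2$) and $\lambda_N=N$, so $\lambda_2=1$; hence in the reduced systems (8) every index $i\in\{2,\dots,N\}$ carries one of only two matrices, $\Gamma(1)$ or $\Gamma(N)$, and by (10) the convergence rate is $r_2=\max\{\rho(\Gamma(1)),\rho(\Gamma(N))\}$. The heart of part (i) is that the parameters in (29) are precisely those for which $p(z,1)=(z-\tilde r_2)^3$ and $p(z,N)=(z+\tilde r_2)^3$, so that $\rho(\Gamma(1))=\rho(\Gamma(N))=\tilde r_2$ and $r_2=\tilde r_2$. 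I would \emph{derive} the parameters in this way rather than merely verify them, since the same computation also pins down $\tilde r_2$.

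Concretely, for $M=2$ the polynomial (11) is $p(z,\lambda)=z^3+(\varepsilon_0\lambda-1-\theta_0)z^2+(\varepsilon_1\lambda-\theta_1)z+(\varepsilon_2\lambda-\theta_2)$. Imposing $p(z,1)=(z-r)^3$ and $p(z,N)=(z+r)^3$ and subtracting the coefficient equations power by power gives $\varepsilon_0(N-1)=6r$, $\varepsilon_1(N-1)=0$, $\varepsilon_2(N-1)=2r^3$ — exactly the $\varepsilon_m$ of (29) — after which the $\lambda=1$ equations yield $\theta_1=-3r^2$, $\theta_0=\tfrac{6r}{N-1}+3r-1$, $\theta_2=\tfrac{2r^3}{N-1}+r^3$. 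The necessary condition $\theta_0+\theta_1+\theta_2=0$ of Lemma 2 then forces
\[
\frac{2r(r^2+3)}{N-1}=(1-r)^3 ,
\]
which, on forming $N+1=(N-1)+2$, is equivalent to $\mu=\tfrac{r^3+3r}{3r^2+1}$ with $\mu=\tfrac{N-1}{N+1}$, i.e.\ to
\[
r^3-3\mu r^2+3r-\mu=0 .
\]
Substituting this relation back into the $\theta_m$ collapses them to the closed forms in (29). It remains to name the root: $f(r)=r^3-3\mu r^2+3r-\mu$ has $f'(r)=3\big((r-\mu)^2+1-\mu^2\big)>0$ (since $0<\mu<1$) and $f(0)=-\mu<0<4(1-\mu)=f(1)$, so there is a unique real root $\tilde r_2$, lying in $(0,1)$; a short simplification (using $(\mu^3-\mu)^2+(1-\mu^2)^3=(1-\mu^2)^2$ and noting that the factor $\tfrac{\sqrt3\,\mathrm i-1}{2}$ in (30) only rotates the principal cube root of the negative argument $(\mu+1)^2(\mu-1)$ back to its real value) shows that the Cardano-type expression (30) equals this root, and moreover equals $\tfrac{\sqrt[3]{N}-1}{\sqrt[3]{N}+1}$ — which one checks solves the cubic, e.g.\ via the triple-angle identity for $\tanh$. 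Since $\tilde r_2<1$, consensus holds (Lemma 3), and $\rho(\Gamma(1))=\rho(\Gamma(N))=\tilde r_2$ gives the stated rate, proving (i).

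For part (ii): on the star Theorem 1 gives the optimal one-tap rate $r_1^*=\tfrac{\sqrt N-1}{\sqrt N+1}$ (with $\lambda_2=1$, $\lambda_N=N$). Since $f$ is strictly increasing and $f(\tilde r_2)=0$, it is enough to check $f(r_1^*)>0$. Writing $b=\sqrt N$, $r_1^*=\tfrac{b-1}{b+1}$, $\mu=\tfrac{b^2-1}{b^2+1}$ and clearing denominators, a direct expansion gives
\[
f(r_1^*)=\frac{8b^2(b-1)}{(b+1)^3(b^2+1)}>0\qquad(N>1),
\]
so $\tilde r_2<r_1^*$. (Equivalently, $r_1^*=\tanh(\tfrac14\ln N)$ and $\tilde r_2=\tanh(\tfrac16\ln N)$, with $\tanh$ increasing; or simply use that $x\mapsto\tfrac{x-1}{x+1}$ is increasing on $(0,\infty)$ and $\sqrt[3]{N}<\sqrt N$.)

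The result is elementary; the two places needing care are confirming that the Cardano expression (30) selects the root of $r^3-3\mu r^2+3r-\mu=0$ lying in $(0,1)$ (the cube-root branch bookkeeping) and the algebra that turns the coefficient-matched $\theta_m$ into the closed forms of (29) and produces the clean factorization $f(r_1^*)=8b^2(b-1)/[(b+1)^3(b^2+1)]$. I would therefore organize the write-up around the single identity $r^3-3\mu r^2+3r-\mu=0$ (equivalently $\tilde r_2=\tfrac{\sqrt[3]N-1}{\sqrt[3]N+1}$) and obtain everything else from it.
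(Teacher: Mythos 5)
Your proposal is correct, and it reaches the same parameter set, the same cubic $r^3-3\mu r^2+3r-\mu=0$, and the same conclusion by a genuinely different route. The paper proceeds by the bilinear substitution $z=r\frac{s+1}{s-1}$, applies the Routh--Hurwitz conditions $f_0^{(i)},f_1^{(i)},f_2^{(i)},f_3^{(i)},g^{(i)}\ge 0$ for $i=2,N$, extracts the bound $\varepsilon_0\le\frac{6r}{N-1}$ from a positive linear combination of these inequalities, and attains it by forcing six of the Routh coefficients to vanish, which yields (29) and $h(r)=0$; it then quotes Cardano for (30). You instead observe directly that (29) is exactly the parameter choice making $p(z,1)=(z-\tilde r_2)^3$ and $p(z,N)=(z+\tilde r_2)^3$, so that both companion matrices have spectral radius $\tilde r_2<1$ and the rate in Definition~3 is immediate; the constraint $\theta_0+\theta_1+\theta_2=0$ (Lemma~2) is what produces the cubic. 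Your route is shorter and self-verifying (no stability criterion needed, since the roots are exhibited explicitly), and your simplifications $(\mu^3-\mu)^2+(1-\mu^2)^3=(1-\mu^2)^2$ and $\tilde r_2=\frac{\sqrt[3]{N}-1}{\sqrt[3]{N}+1}$ are correct and make both the branch bookkeeping in (30) and part (ii) transparent (monotonicity of $x\mapsto\frac{x-1}{x+1}$ together with $\sqrt[3]{N}<\sqrt{N}$, versus the paper's computation $h(r_1^*)=\frac{8(\sqrt N-1)N}{(\sqrt N+1)^3(N+1)}>0$, which agrees with your $b=\sqrt N$ factorization). What the paper's longer argument buys, and yours does not, is the inequality $\varepsilon_0\le\frac{6r}{N-1}$ derived from the full Routh system, which gestures at optimality of $\tilde r_2$ within the two-tap family on the star; since the theorem as stated only claims achievability of $\tilde r_2$ and $\tilde r_2<r_1^*$, your verification is fully sufficient for what is asserted.
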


\begin{proof}
(i) For a star network with $N$ nodes, the eigenvalues of its Laplacian matrix are
\begin{equation} \nonumber
{\lambda _i} = \left\{ {\begin{array}{*{20}{l}}
{0,}&{\,\,\,\,i = 1}\\
{1,}&{\,\,\,\,2 \le i \le N - 1}\\
{N,}&{\,\,\,\,i = N}
\end{array}} \right..
\end{equation}
The consensus is achieved if and only if the roots of
\begin{equation}
\begin{aligned}
{p}(z,\lambda_i) =& {z^3} \!+\! ({\varepsilon _0}{\lambda _i} \!-\! 1 \!- \!{\theta _0})z{^2} \!+\! ({\varepsilon _1}{\lambda _i} \!+\! {\theta _0} \!+\! {\theta _2})z \\&+\! {\varepsilon _2}{\lambda _i} \!-\! {\theta _2} = 0,  \,\,\, i = 2,N,
\end{aligned}
\end{equation}
are in the unit circle.
Let $z = r\frac{{s + 1}}{{s - 1}}$ in (31), and denote
\begin{equation}
\begin{aligned}
&{f_0^{(i)}} \!= \!{r^3} \!+\! ({\varepsilon _0}{\lambda _i} \!-\! 1 \!-\! {\theta _0}){r^2} \!+\! ({\varepsilon _1}{\lambda _i} \!+\! {\theta _0} \!+\! {\theta _2})r \!+\! ({\varepsilon _2}{\lambda _i} \!-\! {\theta _2}),\\
&{f_1^{(i)}}\! = \!3{r^3} \!\!+\! ({\varepsilon _0}{\lambda _i} \!\!-\! 1 \!\!-\! {\theta _0}){r^2} \!\!-\! ({\varepsilon _1}{\lambda _i} \!\!+\! {\theta _0} \!+\! {\theta _2})r \!-\! 3({\varepsilon _2}{\lambda _i} \!\!-\! {\theta _2}),\\
&{f_2^{(i)}}\!=\! 3{r^3} \!\!-\!\! ({\varepsilon _0}{\lambda _i} \!\!-\! 1\! -\!\! {\theta _0}){r^2} \!\!-\!\! ({\varepsilon _1}{\lambda _i} \!\!+\! {\theta _0} \!+ {\theta _2})r \!+\! 3({\varepsilon _2}{\lambda _i} \!\!-\! {\theta _2}),\\
&{f_3^{(i)}} \!=\! {r^3} \!\!-\! ({\varepsilon _0}{\lambda _i}\! -\! 1 \!-\! {\theta _0}){r^2} \!+\! ({\varepsilon _1}{\lambda _i} \!+\! {\theta _0} \!+\! {\theta _2})r\!-\! ({\varepsilon _2}{\lambda _i}\! - \!{\theta _2}).
\end{aligned}
\end{equation}
Then the roots of $p(z,\lambda_i)\!=\!0$ are in the circle with radius $r$ if and only if the polynomial
\begin{equation}
{{\tilde p}}(s,\lambda_i)={f_0^{(i)}}s^3+{f_1^{(i)}}s^2+{f_2^{(i)}}s+{f_3^{(i)}}
\end{equation}
is stable.
Denote
${g^{(i)}}\!=\!{f_1^{(i)}}{f_2^{(i)}}\!-\!{f_0^{(i)}}{f_3^{(i)}}$.
According to the Routh stability criterion,
the polynomial (33) is stable or marginally stable, if and only if
\begin{equation}
f_0^{(i)} \ge 0,f_1^{(i)} \ge 0,f_2^{(i)} \ge 0, f_3^{(i)} \ge 0,g^{(i)} \ge 0,i = 2,N.
\end{equation}
Perform the operations
\begin{equation} \nonumber
\frac{3}{4}\! \times \!f_0^{(2)} \!+\! \frac{1}{2} \!\times\! f_1^{(2)} \!+\! \frac{1}{4} \!\times\! f_1^{(N)} \!+ \!\frac{1}{4} \!\times\! f_2^{(2)}\! + \! \frac{1}{2}\! \times\! f_2^{(N)} \!+\! \frac{3}{4} \!\times\! f_3^{(N)}\!,
\end{equation}
get $(1 \!-\! N){r^2}{\varepsilon _0} + 6{r^3} \ge 0$.
It follows that $\varepsilon _0 \le \frac{{6r}}{{N-1}}$.
The parameter $\varepsilon _0$ can be taken to the upper bound $\frac{{6r}}{{N-1}}$ when
\begin{equation}
f_0^{(2)} \!\!= 0, f_1^{(2)}\!\!= 0,  f_1^{(N)} \!\!= 0, f_2^{(2)} \!\!= 0, f_2^{(N)}\!\!= 0,  f_3^{(N)}\!\!= 0.
\end{equation}
Let the cubic polynomial
\begin{equation}
h(r) = {r^3} - 3\frac{{N - 1}}{{N + 1}}{r^2} + 3r - \frac{{N - 1}}{{N + 1}}.
\end{equation}
It is calculated by (35) that the control parameters satisfies (29) and $h(r) \!=\!0$.
Applying the Cardano's formula \cite{2010Cardano},
the sole real root of the equation $h(r) \!= \!0$ in the interval $r \in \left( {0,1} \right)$ is given by (30).
It is verified that the remaining constraints
\begin{equation} \nonumber
f_0^{(N)}>0,f_3^{(2)}> 0,g^{(2)} =g^{(N)}=0
\end{equation}
are satisfied by using the control parameters (29).
This completes the proof.
\\(ii)
Note that the derivative of the function $h(r)$ satisfies
\begin{equation} \nonumber
h'(r) =3{r^2} \!-\! 6\frac{{N - 1}}{{N + 1}}r \!+\! 3 > 3{r^2} \!- \!6r\! +\! 3 = 3{(r\! -\! 1)^2} > 0.
\end{equation}
Thus, $h(r)$ is increasing monotonically.
Substitute $r=r_1^*=\frac{{\sqrt N - 1}}{{\sqrt N + 1}}$ into $h(r)$, get
\begin{equation} \nonumber
h(r_1^*) = \frac{{8(\sqrt N  - 1)N}}{{{{(\sqrt N  + 1)}^3}(N + 1)}} > 0.
\end{equation}
It follows from $h(\tilde r_2) = 0<h(r_1^*)$ that $\tilde r_2 < r_1^*$.
\end{proof}

It follows from Theorem 3 that on star networks, the consensus with the two-tap memory can be achieved faster.
In fact, the consensus with the two-tap memory is not only accelerated on star networks.
For example, applying the control parameters (29), the maximum modulus root of $p(z,\lambda)=0,\lambda \in [1,5]$ is shown in Fig. 2.
It can be observed from Fig. 2 that the consensus with the two-tap memory is accelerated on some special networks, where the nonzero eigenvalues of its Laplacian matrix are in the interval $\lambda \in [1,1.28]\cup[4.72,5]$.

\begin{figure}[!htbp]
\centering
\includegraphics[height=6cm]{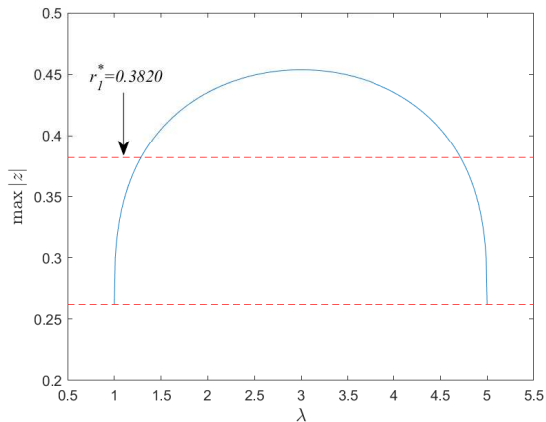}
\caption{The maximum modulus root of ${p}(z,\lambda)\!=\!0$}
\label{fig:label}
\end{figure}

\section{Numerical examples}
In this section, some examples are listed to verify the validity and correctness of the proposed results.

\begin{figure}[!htbp]
\centering
\includegraphics[height=6.2cm]{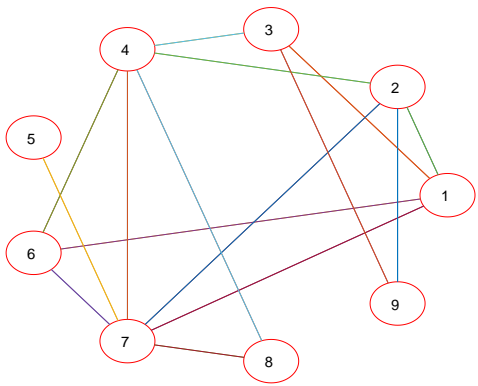}
\caption{The network topology $\mathcal{G}_1$}
\label{fig:label}
\end{figure}

\begin{example}
In this example, the convergence performance of different consensus algorithms with or without memory is compared.
The compared algorithms are:
(i) the best constant gain scheme (BC) proposed in \cite{XIAO200465},
(ii) accelerated consensus algorithm with the state deviation memory (SDMem) proposed in \cite{9117158},
(iii) the optimal one-tap memory scheme (OptMem) in this paper.
Randomly generate a network $\mathcal{G}_1$ with 9 nodes, as shown in Fig. 3.
It can be calculated that $\lambda_2\!=\!0.8835,\lambda_N\!=\!7.1716$.
Table I lists the optimal convergence rate and corresponding control parameters of each algorithm.
Generate the initial state of each agent in the interval $[-10,10]$.
In order to facilitate the comparison of the convergence speed,
the definition of $\epsilon$-convergence time \cite{ALEX2010CONVERGENCE} is introduced:
\[\mathcal{T}(\epsilon ) \!=\! \min \left\{ {{k^*}\! :\frac{{{{\left\| {\bm{x}(k) \! -\!  \bar x\bm{1}} \right\|}_\infty }}}{{{{\left\| {\bm{x}(0) \! -\!  \bar x\bm{1}} \right\|}_\infty }}} \le \epsilon   \,\,\,\, \forall k \! \ge\!  {k^*},\forall \bm{x}(0) \! \ne\!  \bar x\bm{1}} \right\}.\]
Set the error threshold as $\epsilon\!=\!10^{-5}$.
Fig. 4 shows the $\epsilon$-convergence time of each consensus algorithm.
It can be observed that the convergence speed of the optimal one-tap memory scheme proposed in this paper is faster than that of SDMem with $M\!=\!1$, and even faster than that of SDMem with $M\!=\!2$.
In addition, the convergence speed of the memoryless BC is the slowest.
\end{example}

\begin{table}[htbp]
\centering
  \caption{The optimal convergence rate and corresponding control parameters under different algorithms}
\begin{tabular}{ccccc}
\toprule
                  & OptMem            & BC \cite{XIAO200465} & SDMem \cite{9117158}  &  SDMem \cite{9117158}  \\
                  & with $M\!=\!1$    & without memory      & with $M\!=\!1$   &  with $M\!=\!2$  \\  \midrule
$r^*$             & 0.4804            & 0.7806               & 0.6402           & 0.5582\\ \midrule
$\varepsilon_0^*$ & 0.3056            & 0.2483               & 0.3180           & 0.3226\\ \midrule
$\varepsilon_1^*$ & 0                 & N/A                  & 0.0571           & 0.0789 \\ \midrule
$\varepsilon_2^*$ & N/A               & N/A                  & N/A              & 0.0112 \\ \midrule
$\theta_0^*$      & 0.2308            & N/A                  & N/A              &  N/A\\ \midrule
$\theta_1^*$      & -0.2308           & N/A                  & N/A              &  N/A\\
\bottomrule
\end{tabular}
\label{tab:addlabel}
\end{table}

\begin{figure}[!htbp]
\centering
\includegraphics[height=6.2cm]{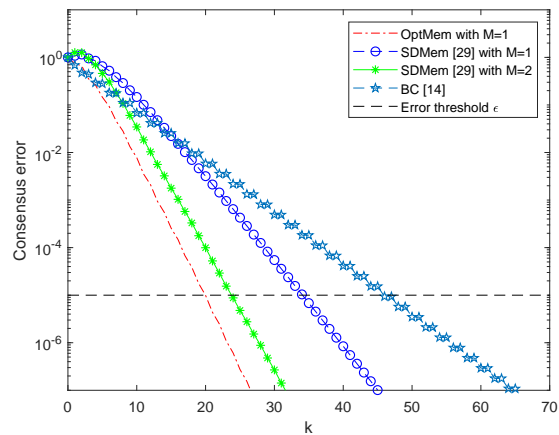}
\caption{The $\epsilon$-convergence time under different algorithms}
\label{fig:label}
\end{figure}

\begin{example}
In this example, the proposed algorithm is verified on large-scale networks.
Consider 80 random connected networks of 500 nodes,
which is randomly generated by a small-world network model under a rewiring probability $p\!=\!0.7$.
The optimal convergence rate on each network is shown in Fig. 5.
It can be seen from Fig. 5 that the algorithm proposed is also optimal on large-scale networks.
\end{example}

\begin{figure}[!htbp]
\centering
\includegraphics[height=6.2cm]{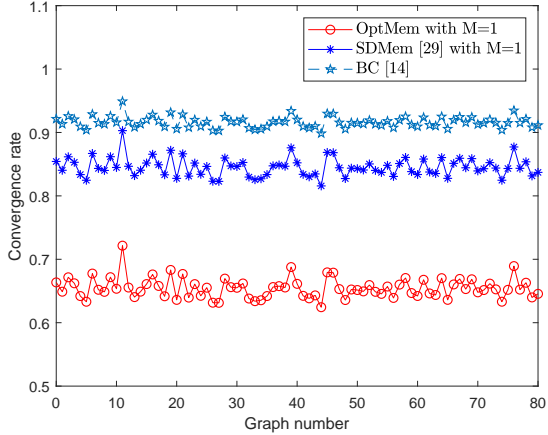}
\caption{The optimal convergence rate on large-scale networks}
\label{fig:label}
\end{figure}

%
%
%

\begin{example}
This example demonstrates the effectiveness of the control strategy in Theorem 3.
Table II lists the convergence rate $\tilde r_2$ and $r_1^*$ on the star network with different numbers of nodes.
Consider $N\!=\!10$, and randomly generate the initial state of each agent in the interval $[-10,10]$.
The convergence of consensus is shown in Fig. 6.
It can be observed that on the star network, the consensus with the two-tap memory scheme can be reached more quickly than that with the one-tap memory scheme.
\end{example}

\begin{table}[htbp]
\centering
  \caption{Convergence rate on the star network}
\begin{tabular}{cccccc}
\toprule
                 &$ N=5$           & $ N=10$      & $ N=20$   & $ N=50$  & $ N=100$ \\ \midrule
$r_1^*$          & 0.3820           & 0.5195      & 0.6345    & 0.7522   &  0.8182         \\
$\tilde r_2$       & 0.2620          & 0.3660     & 0.4616    & 0.5730   &  0.6455         \\
\bottomrule
\end{tabular}
\label{tab:addlabel}
\end{table}

\begin{figure}[!htbp]
\centering
\includegraphics[height=6.6cm]{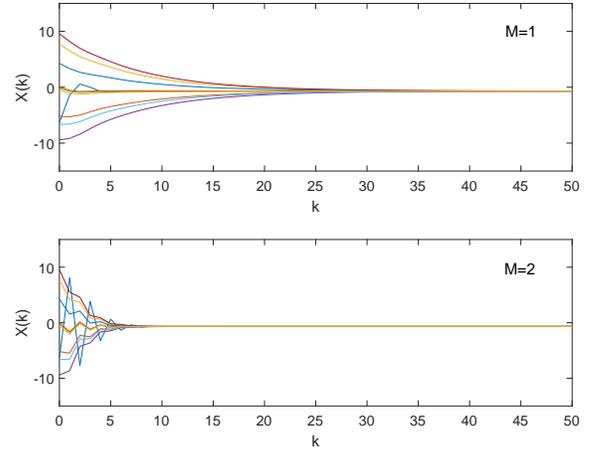}
\caption{Consensus on the star network}
\label{fig:label}
\end{figure}

\section{Conclusion}

This paper has proposed a more general control protocol with both the node memory and the state deviation memory to accelerate the consensus over multi-agent networks.
The optimal convergence rate with the one-tap memory has been formulated based on the Jury stability criterion.
The state deviation memory has been pointed out to be useless for the optimal convergence under the one-tap memory scheme.
By transforming the optimization problem of the convergence rate into the robust stabilization of the feedback system,
it has been proved that in the worst case, the one-tap node memory scheme is sufficient to achieve the optimal convergence, and adding any tap of the state deviation memory is unnecessary.
Moreover,
the two-tap state deviation memory has been found to be effective on some special networks.
Specially, for star networks, an optimized explicit convergence rate with the two-tap memory scheme has been given.
Numerical examples have demonstrated the validity and correctness of the obtained results.


%

%
%
%
%
%

\ifCLASSOPTIONcaptionsoff
  \newpage
\fi

\bibliographystyle{IEEEtran}
\bibliography{IEEEabrv,IEEEexample}

%

%
%
%




\end{document}